\title{\bf{Walsh Sampling with Incomplete Noisy Signals}}
\author{\textsc{Yi~Janet Lu}\\
National Research Center of Fundamental Software, Beijing, P.R.China\\
Department of Informatics, University of Bergen, Bergen, Norway\\
Email: {yi.janet.lu@gmail.com}
}
\newtheorem{theorem}{Theorem}
\newtheorem{corollary}{Corollary}
\newtheorem{proposition}{Proposition}
\newtheorem{conjecture}{Conjecture}
\begin{document}

\lstdefinestyle{custom}{
 captionpos=t,
 belowcaptionskip=1\baselineskip,
 breaklines=true,
 frame=L,
 xleftmargin=\parindent,
 showstringspaces=false,
 basicstyle=\footnotesize\ttfamily,
 keywordstyle=\bfseries\color{green!40!black},
 commentstyle=\itshape\color{purple!40!black},
 identifierstyle=\color{black},
 stringstyle=\color{orange}
}

\lstset{style=custom}

\maketitle

\begin{abstract}
With the advent of massive data outputs at a regular rate, admittedly, signal processing technology plays an increasingly key role. Nowadays, signals are not merely restricted to physical sources, they have been extended to digital sources as well. 

Under the general assumption of discrete statistical signal sources, we propose a practical problem of sampling incomplete noisy signals for which we do not know a priori and the sampling size is bounded. We approach this sampling problem by Shannon's channel coding theorem. Our main results demonstrate that it is the large Walsh coefficient(s) that characterize(s) discrete statistical signals, regardless of the signal sources. By the connection of Shannon's theorem, we establish the necessary and sufficient condition for our generic sampling problem for the first time. Our generic sampling results find practical and powerful applications in not only statistical cryptanalysis, but software system performance optimization.

\noindent
{\bf Keywords.}
Walsh transform,
Shannon's channel coding theorem,
channel capacity,
classical distinguisher,
statistical cryptanalysis,
generic sampling,
digital signal processing.
\end{abstract}

\section{Introduction}

With the advent of massive data outputs regularly,
we are confronted by the challenge of
big data processing and analysis. Admittedly,
signal processing has become an increasingly key technology. 
An open question is the sampling problem with the signals, for which we assume that we do not know \emph{a priori}.
Due to reasons of practical consideration, sampling is affected
by possibly strong noise and/or the limited measurement precision.
Assuming that the signal source is not restricted to a particular application domain,
we are concerned with a practical and generic problem to sample these noisy signals.

Our motivation arises from the following problem in modern applied statistics.
Assume the discrete statistical signals in a general setting as follows.
The samples, generated by an arbitrary (possibly noise-corrupted) source $F$,
 are $2^n$-valued for a fixed $n$. 
It is known to be a hypothesis testing problem
 to test presence of any signals.  
Traditionally, $F$ is a deterministic function with small or medium input size.
It is computationally easy to collect the \emph{complete and precise} distribution $f$ of $F$.
Based on the notion of Kullback-Leibler distance,
 the conventional approach (aka. the classic distinguisher) 
solves the sampling problem, given the distribution $f$ \emph{a priori} (see \cite{vaudenay_textbook2006}).
Nevertheless, in reality, $F$ might be
 a function that we do not have the complete description, or it might have large input size,
or it maybe a non-deterministic function.
Thus, it is infeasible to collect the complete and precise distribution $f$.
This gives rise to the new generic statistical
sampling problem with discrete incomplete noisy signals, using the bounded number of samples.

In this work, we show that we can solve the generic sampling problem as reliable as possible 
without knowing signals \emph{a priori}.
By novel translations, Shannon's channel coding theorem 
can solve the generic sampling problem 
under the general assumption of statistical signal sources.  
Specifically, the \emph{necessary and sufficient} condition is given \emph{for the first time} 
to sample the incomplete noisy signals with bounded sampling size for signal detection.
It is interesting to observe that the classical signal processing tool of Walsh transform
\cite{dsp_book,walsh-book2} is essential: 
regardless of the signal sources, it is the large Walsh coefficient(s) that
 characterize(s) discrete statistical signals.
Put other way, when sampling incomplete noisy signals of the same source multiple times,
one can expect to see \emph{repeatedly} those large Walsh coefficient(s) of same magnitude(s) at the
fixed frequency position(s). Note that this is known in application
domains such as images, voices.
 Our results show strong connection between Shannon's theorem and Walsh transform, 
both of which are the key innovative technologies in digital signal processing.
Our generic sampling results find practical and useful applications in not only statistical cryptanalysis - it is expected to become a powerful universal analytical tool for the core building blocks of symmetric cryptography (cf. \cite{my_new_submission,vaudenay_new}),
but performance analysis and heterogeneous acceleration. 
The latter seems to be one of the main bottlenecks for large-scale IT systems 
 in the era of the revolutionary development of memory technologies.

The rest of the paper is organized as follows. 
In Section \ref{sect_review_statistics},
we review the basics of Walsh transforms and its application of multi-variable
tests in statistics. 
In Section \ref{sect_shannon},
Shannon's famous channel coding theorem, also known as Shannon's Second Theorem,
is reviewed.
In Section \ref{sect_main_result},
we present our main sampling results: we
put forward two sampling problems,
namely, the classical and generic versions;
we also conjecture a quantitative relation between Renyi's divergence of degree 1/2 and Shannon's channel capacity.
In Section \ref{sect_experiments}, we give illustrative applications and experimental results.
Finally, we give conclusions and future work in Section \ref{sect_end}.
\section{Walsh Transforms in Statistics}
\label{sect_review_statistics}

Given a real-valued function 
$f: GF(2)^n \to \rm{R}$, which is defined on an $n$-tuple binary vector of input,
the Walsh transform of $f$, denoted by $\widehat{f}$, is another real-valued function defined as
\begin{equation}
\label{E_walsh_def_function}
\widehat{f}(i) \stackrel{\text{def}}{=} \sum_{j\in GF(2)^n}(-1)^{\langle i,j\rangle} f(j),
\end{equation}
for all $i \in GF(2)^n$, where $<i,j>$ denotes the inner product between two $n$-tuple binary vectors $i,j$.
For later convenience, we give an alternative definition below.
Given an input array $x=(x_0,x_1,\ldots,x_{2^{n}-1})$ of $2^n$ reals in the time domain, 
the Walsh transform $y= \widehat{x} =(y_0,y_1,\ldots,y_{2^{n}-1})$ of $x$ is defined by
\begin{equation}
\label{E_walsh_def_array}
y_i \stackrel{\text{def}}{=} \sum_{j\in GF(2)^n} (-1)^{\langle i,j\rangle} x_j,
\end{equation}
for any $n$-tuple binary vector $i$. 
We call $x_i$ (resp. $y_i$) the time-domain component (resp. transform-domain coefficient) of 
the signal with dimension $2^n$.
For properties and references on Walsh transforms, 
we refer to \cite{walsh-book2,my_new_submission,Vetterli2015}.

Let $f$ be a probability distribution of an $n$-bit random variable 
$\mathcal{X}=(X_n,X_{n-1},\ldots,X_1)$, where each $X_i\in \{0,1\}$.
Then, $\widehat{f}(m)$ is the \emph{bias} of the Boolean variable $\langle m,\mathcal{X}\rangle$ for any fixed 
$n$-bit vector $m$,
which is often called the output \emph{pattern} or \emph{mask} (and note that the pattern $m$ should be nonzero). 
Here,
recall that a Boolean random variable $\mathcal{A}$ has \emph{bias} $\epsilon$, which is defined by
\begin{equation}
\epsilon \stackrel{\text{def}}{=} E[(-1)^\mathcal{A}] = \Pr(\mathcal{A}=0)-\Pr(\mathcal{A}=1).
\end{equation}
Hence, we always have $-1 \le \epsilon \le 1$ and
if $\mathcal{A}$ is uniformly distributed, $\mathcal{A}$ has bias 0.

Walsh transforms were used 
in statistics to find dependencies within a multi-variable data set.
In the multi-variable tests, each $X_i$ indicates the presence or absence (represented by `1' or `0') 
 of a particular feature in a pattern recognition experiment.
Fast Walsh Transform (FWT) is used to obtain all coefficients $\widehat{f}(m)$ in one shot.
By checking the Walsh coefficients one by one and identifying the \emph{large}\footnote{Throughout the paper,
 we refer to the large transform-domain coefficient $d$ as the one
  with a large absolute value.} ones, 
we are able to tell the dependencies among $X_i$'s. 
For instance,
we have a histogram of the pdf (the probability density function) of the triples $(X_0, X_1, X_2)$ as depicted in Fig. 1.
It is obtained by an experiment of trying a total of $160$ times.
The Walsh spectrum of the histogram, i.e., of the array $(24,18,16,26,22,14,16,24)$,
as calculated by (\ref{E_walsh_def_array}),
 is shown in Fig. 2.
The largest nontrivial Walsh coefficient is found to be $32$ located at the index position $(0,1,1)$.
This implies that the correlation is observed to be strongest for the variables $X_1, X_2$.

\begin{figure}[!t]
\begin{center}
\includegraphics[scale=0.6]{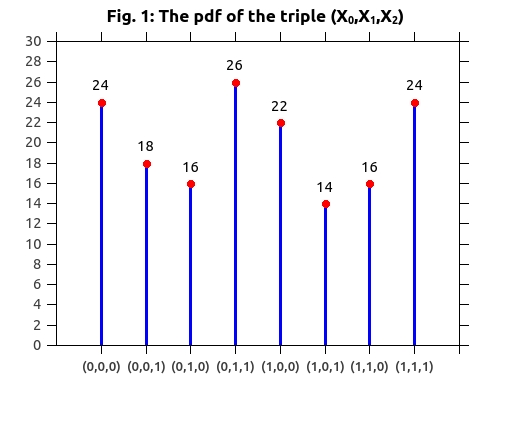}
\end{center}
\end{figure}

\begin{figure}[!t]
\begin{center}
\includegraphics[scale=0.6]{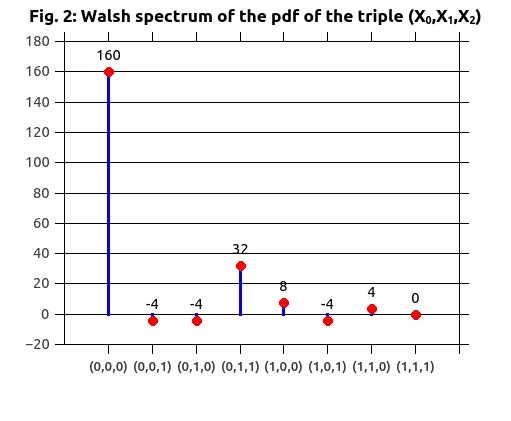}
\end{center}
\end{figure}

\section{Review on Shannon's Channel Coding Theorem}
\label{sect_shannon}

We briefly review Shannon's famous channel coding theorem\footnote{Sometimes it's called Shannon's Second Theorem.}
 (cf. \cite{it-book}).
First, we recall basic definitions of Shannon entropy.
The entropy $H(X)$ of a discrete random variable $X$ with alphabet
$\mathcal{X}$ and probability mass function $p(x)$
is defined by
\[
H(X) \stackrel{\text{def}}{=} -\sum_{x\in \mathcal{X}} p(x)\log_2 p(x).
\]
The joint entropy $H(X_1,\ldots,X_n)$ of a collection of discrete random variables $(X_1,\ldots,X_n)$
with a joint distribution $p(x_1, x_2, \ldots, x_n)$ is defined by
\[
H(X_1,\ldots,X_n) \stackrel{\text{def}}{=} - \sum_{\substack{x_1\\ \cdots\\ x_n}} p(x_1, \ldots, x_n)\log_2 p(x_1, \ldots, x_n).
\]
Define the conditional entropy $H(Y|X)$ of a random variable $Y$ given $X$ by
\[
H(Y|X) \stackrel{\text{def}}{=} \sum_x p(x)H(Y|X=x).
\]
The mutual information $I(X;Y)$ between
two random variables $X,Y$ is equal to $H(Y)-H(Y|X)$, which always equals $H(X)-H(X|Y)$.
 A communication channel is a system in which the output $Y$ depends
probabilistically on its input $X$. It is characterized by a probability
transition matrix that determines the conditional distribution of the
output given the input.

\begin{theorem}[Shannon's Channel Coding Theorem]
\label{thm_1}
Given a channel, denote the input, output by $X,Y$ respectively.
We can send information at the maximum rate $C$ 
bits per transmission with an arbitrarily low probability of error, where $C$ is the channel capacity defined by 
\begin{equation}\label{E_capacity_def}
C = \max_{p(x)} I(X;Y), 
\end{equation}
and the maximum is taken over all possible input distributions $p(x)$.
\end{theorem}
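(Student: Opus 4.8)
The plan is to establish the two halves of the theorem separately: the \emph{achievability} (direct) part, showing that every rate $R<C$ is attainable with vanishing error, and the \emph{converse}, showing that no rate $R>C$ can be reliable. Throughout I model the channel as memoryless, so that for a block of length $N$ the transition law factorizes as $p(y^N\mid x^N)=\prod_{k=1}^{N} p(y_k\mid x_k)$, and I fix $p(x)$ to be a distribution attaining the maximum in (\ref{E_capacity_def}).

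For achievability I would use Shannon's random-coding argument together with joint typicality decoding. First I generate a random codebook of $\lceil 2^{NR}\rceil$ codewords, each an i.i.d.\ draw of length $N$ from the capacity-achieving $p(x)$; this codebook is revealed to both encoder and decoder. To send message $W$, the encoder transmits the $W$-th codeword $X^N(W)$, and the decoder declares $\hat W$ to be the unique index whose codeword is \emph{jointly typical} with the received $Y^N$, where joint typicality is measured against the product distribution $p(x)p(y\mid x)$. The error analysis splits into two events: the transmitted pair failing to be jointly typical, whose probability $\to 0$ by the joint Asymptotic Equipartition Property; and a \emph{wrong} codeword being jointly typical with $Y^N$, whose probability per codeword is at most $2^{-N(I(X;Y)-\epsilon)}$. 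A union bound over the remaining $2^{NR}-1$ codewords gives total error at most $2^{NR}2^{-N(I(X;Y)-\epsilon)}\to 0$ whenever $R<I(X;Y)=C$. Since the \emph{average} error over random codebooks is small, some fixed codebook is at least as good; discarding the worst half of its codewords converts the average guarantee into a maximal-error guarantee at essentially the same rate.

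For the converse I would assume a sequence of codes with rate $R$ and maximal error $\to 0$ and derive $R\le C$. Treating the message $W$ as uniform on $\{1,\dots,2^{NR}\}$ and $\hat W$ as its decoded value, Fano's inequality bounds the residual uncertainty by $H(W\mid\hat W)\le 1+P_e^{(N)}NR$. Combining this with the identity $NR=H(W)=I(W;\hat W)+H(W\mid\hat W)$, the data-processing inequality $I(W;\hat W)\le I(X^N;Y^N)$, and the single-letter bound $I(X^N;Y^N)\le\sum_k I(X_k;Y_k)\le NC$ (which uses memorylessness and the definition of $C$), I obtain $NR\le NC+1+P_e^{(N)}NR$. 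Dividing by $N$ and letting $P_e^{(N)}\to0$ forces $R\le C$.

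The hardest and most inventive step is the achievability direction, specifically the random-coding plus joint-typicality argument: the nonobvious idea is to analyze the \emph{expected} error over a randomly drawn ensemble of codebooks rather than exhibit an explicit good code, and to control the probability that an unrelated codeword looks typical with the output. The converse is comparatively mechanical once Fano's inequality and the data-processing inequality are in hand, the only subtlety being the single-letterization $I(X^N;Y^N)\le NC$ that crucially relies on the channel being memoryless.
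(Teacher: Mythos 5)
The paper does not prove this statement: it is quoted as a classical result in a review section and attributed to Cover and Thomas \cite{it-book}. Your outline --- random coding with joint-typicality decoding for achievability, and Fano's inequality plus data processing with single-letterization for the converse --- is precisely the canonical proof given in that reference, and it is correct as a sketch (modulo the usual caveat that the theorem as stated here applies to discrete memoryless channels, an assumption you rightly make explicit even though the paper's statement omits it).
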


For the binary symmetric channel (BSC) with crossover probability $p$, that is, 
the input symbols are complemented
with probability $p$,
so the transition matrix is of the form
\begin{equation}
\left(
\begin{array}{cc}
1-p & p \\
p & 1-p \\
\end{array}
\right)
\,
.
\end{equation}
We can express $C$ by (cf. \cite{it-book}):
\begin{equation}\label{E_capacity}
C = 1 - H(p) 
\textrm{ bits/transmission.}
\end{equation}

We refer to the BSC with crossover probability $p = (1 + d)/2$ and $d$ is small (i.e., $|d|\ll 1$) as an 
\emph{extremal BSC}.
Note that
such a BSC becomes useful in studying the correlation attacks (see \cite{Meier1989correlation_attacks}) on stream ciphers.
To put simply, the observed keystream bit is found to be correlated with one 
bit which is a linear relation of the internal state.
In this case,
the probability that the equality holds is denoted by $1-p$ (and so the crossover probability of this BSC is $p$).
Using 
\begin{equation}
H\Bigl(\frac{1+d}{2}\Bigr) = 1 - \Bigl(\frac{d^2}{2} + \frac{d^4}{12} + \frac{d^6}{30} + \frac{d^8}{56} + 
\underbrace{\cdots}_{O(d^{10})}
 \Bigr) 
\times \frac{1}{\log 2},
\end{equation}
we can show
\begin{equation}\label{approx_H_special}
H\Bigl(\frac{1+d}{2}\Bigr) = 1 - d^2 /(2\log 2) + O(d^4).
\end{equation}
So, we can show the following result of the channel capacity for an extremal BSC:

\begin{corollary}[extremal BSC]\label{cor1}
Given a BSC channel with crossover probability
 $p = (1 + d)/2$, if $d$ is small (i.e., $|d|\ll 1$), then, 
$C\approx c_0\cdot d^2$, where the constant $c_0=1/(2\log 2)$.
\end{corollary}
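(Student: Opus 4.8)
The plan is to derive the claimed approximation $C \approx c_0 \cdot d^2$ directly from the exact capacity formula for the BSC given in equation~(\ref{E_capacity}), namely $C = 1 - H(p)$, by substituting the extremal crossover probability $p = (1+d)/2$ and invoking the Taylor expansion already recorded in~(\ref{approx_H_special}). Concretely, I would write $C = 1 - H\bigl(\frac{1+d}{2}\bigr)$ and then replace the binary entropy term by its expansion $H\bigl(\frac{1+d}{2}\bigr) = 1 - d^2/(2\log 2) + O(d^4)$. The two $1$'s cancel, leaving $C = d^2/(2\log 2) + O(d^4)$, which upon identifying $c_0 = 1/(2\log 2)$ yields exactly $C \approx c_0 \cdot d^2$ once the higher-order $O(d^4)$ terms are dropped under the hypothesis $|d| \ll 1$.

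The one genuine ingredient that requires justification is the entropy expansion~(\ref{approx_H_special}) itself, since the corollary's whole content is that the leading behaviour of $C$ is quadratic in $d$. To establish it cleanly I would start from the series
\begin{equation}
H\Bigl(\frac{1+d}{2}\Bigr) = 1 - \frac{1}{\log 2}\Bigl(\frac{d^2}{2} + \frac{d^4}{12} + \frac{d^6}{30} + \cdots\Bigr),
\end{equation}
which can be obtained by expanding $-\frac{1+d}{2}\log_2\frac{1+d}{2} - \frac{1-d}{2}\log_2\frac{1-d}{2}$ using the logarithmic series $\log(1 \pm d) = \pm d - d^2/2 \pm d^3/3 - \cdots$. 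The odd-power contributions from the two symmetric terms cancel, and collecting even powers produces the stated coefficients $1/2, 1/12, 1/30, \ldots$ (these are $1/(k(k-1))$ for even $k$). Truncating after the $d^2$ term gives~(\ref{approx_H_special}), with all discarded terms being $O(d^4)$.

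The main obstacle is not any deep difficulty but rather bookkeeping: one must carefully confirm that the odd-order terms in the Taylor expansion cancel by the symmetry $H(p) = H(1-p)$, and that the first surviving correction is genuinely of order $d^4$ rather than $d^3$. This symmetry is what guarantees the clean quadratic leading term and the absence of a linear contribution, which is essential for the approximation to take the simple form claimed. Once that cancellation is verified, the remainder of the argument is a one-line substitution, so I expect the proof to be short, with the entropy expansion doing all the real work.
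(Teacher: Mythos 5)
Your proposal is correct and follows exactly the paper's own route: substitute $p=(1+d)/2$ into the BSC capacity formula $C=1-H(p)$ from~(\ref{E_capacity}) and apply the expansion~(\ref{approx_H_special}), which the paper likewise obtains from the series with coefficients $1/2, 1/12, 1/30, 1/56,\ldots$. Your additional verification of the odd-order cancellation and the coefficient pattern $1/(k(k-1))$ is sound and merely fills in detail the paper leaves implicit.
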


Therefore, 
we can send one bit with an arbitrarily low probability of error with the minimum number of transmissions
 $1/C= (2\log 2)/{d^2}$, i.e., $O(1/d^2)$.
Interestingly, in communication theory, 
this extremal BSC is rare as 
we typically deal with $|d| \gg 0$ (see \cite{Shokrollahi}).

\section{Sampling Theorems with Incomplete Signals}
\label{sect_main_result}

In this section, 
we 
put forward two sampling problems, namely, the classical and generic versions.
Without loss of generality,
we assume that the discrete statistical signals are not restricted to a particular application domain
and the signals are $2^n$-valued for a fixed $n$.

Specifically,
we give the mathematical model on the signal represented by
an arbitrary (and not necessarily deterministic) function $F$
as follows.
Let $X$ be the $n$-bit output sample of $F$, 
assuming that the input is random and uniformly distributed.
Denote the output distribution of $X$ by $f$.
Note that our assumption on a general setting of
 discrete statistical signals
is described by the assumption that $F$ is an arbitrary yet fixed function with the $n$-bit output.

The classical sampling problem can be formally stated as follows.

\begin{theorem}[Classical Sampling Problem]\label{thm_5}
Assume that the largest Walsh coefficient of $f$ is $d=\widehat{f}(m_0)$ for a 
nonzero $n$-bit vector $m_0$.
We can detect signals represented by $F$ with an arbitrarily low probability of error, 
using minimum number $N=(8\log 2)/{d^2}$ of samples of $F$,
i.e., $O(1/{d^2})$.
\end{theorem}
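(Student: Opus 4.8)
The plan is to reduce the detection task to a single-bit distinguishing problem and then route it through Shannon's theorem via the extremal BSC of Corollary~\ref{cor1}. First I would fix the mask $m_0$ achieving the largest Walsh coefficient and study the Boolean variable $A=\langle m_0,\mathcal{X}\rangle$, where $\mathcal{X}$ is an output sample of $F$. By the interpretation of the Walsh transform as a bias (Section~\ref{sect_review_statistics}), $A$ is \emph{exactly} a Bernoulli variable with $\Pr(A=0)=(1+d)/2$ when a signal is present, whereas in the no-signal (uniform) case $A$ has bias $0$, i.e. $\Pr(A=0)=1/2$. Since no other nonzero mask produces a larger bias, $A$ is the most informative single coordinate, and it suffices to detect the signal from $N$ independent copies of $A$ to establish achievability.

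Next I would cast this as channel decoding. Introduce a binary input $b\in\{0,1\}$ encoding the hypothesis ($b=0$: no signal; $b=1$: signal present), and let the observed bit $A$ be the channel output, so that the transition matrix has rows $(1/2,1/2)$ for $b=0$ and $((1+d)/2,(1-d)/2)$ for $b=1$. Detecting the signal is exactly decoding this one input bit, and by Shannon's Channel Coding Theorem (Theorem~\ref{thm_1}) the bit can be recovered with an arbitrarily low probability of error using $N=1/C$ uses of the channel, i.e. $N$ samples of $F$, where $C$ is its capacity.

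It then remains to evaluate $C$. Writing $\alpha=\Pr(b=1)$, I would expand $I(b;A)=H(A)-H(A\mid b)$ using the quadratic approximation of the binary entropy recorded in (\ref{approx_H_special}): the output bias equals $\alpha d/2$, while the conditional entropy mixes $H(1/2)=1$ and $H((1+d)/2)$, and the two quadratic terms combine to $I(b;A)\approx \frac{d^2}{2\log 2}\,\alpha(1-\alpha)$. Maximizing over $\alpha$ yields $\alpha=1/2$ and $C=d^2/(8\log 2)$, precisely a factor $1/4$ of the extremal-BSC capacity $c_0 d^2$ of Corollary~\ref{cor1}; hence $N=1/C=(8\log 2)/d^2=O(1/d^2)$, as claimed.

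I expect the main obstacle to lie in the translation step rather than the arithmetic. One must argue convincingly that restricting attention to the single mask $m_0$ is legitimate for an achievability claim, and that the asymptotic ``arbitrarily low error at any rate below $C$'' guarantee of Theorem~\ref{thm_1} is the correct object for quantifying the detection cost of one hypothesis bit. Equally, one must check that the second-order entropy expansion is valid in the intended regime $|d|\ll 1$, so that the $O(d^4)$ remainder does not perturb the constant $8\log 2$.
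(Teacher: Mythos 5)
Your proposal is correct and follows essentially the same route as the paper: the paper proves this via its Theorem~\ref{thm_3}, which introduces exactly the non-symmetric binary channel you describe (one row biased with $p_e=(1-d)/2$, one row uniform), expands $I(X;Y)\approx (p_0-p_0^2)\,d^2/(2\log 2)$ using (\ref{approx_H_special}), maximizes at $p_0=1/2$ to get $C=d^2/(8\log 2)$, and the reduction to the single Boolean variable $\langle m_0,\mathcal{X}\rangle$ is the same projection step. The only nit is terminological: the output bias (in the paper's sense $\Pr(A=0)-\Pr(A=1)$) is $\alpha d$ rather than $\alpha d/2$, though your subsequent mutual-information formula is the correct one.
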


Note that it can be interpreted as 
the classical distinguisher which is used often in statistical cryptanalysis, though the problem
statement of the classical distinguisher is slightly different and it 
 uses a slightly different $N$ 
 (cf. \cite{vaudenay_textbook2006}).
The classical sampling problem
 assumes that $F$ together with its characteristics (i.e., the largest Walsh coefficient $d$) are known \emph{a priori}.

Next, we present our main sampling results for practical (and widely applicable) sampling.
Assuming that it is infeasible to know signal represented by $F$ \emph{a priori}, 
we want to use the bounded number of samples to detect signals with an arbitrarily low probability of error.
Note that the sampled signal is often incomplete\footnote{that is, it is possible that not all the outputs are generated by sampling.} and so the associated distribution is not precise.
 We call this problem as generic sampling with incomplete noisy signals.
In analogy to the classical distinguisher, this result can be interpreted as a generalized distinguisher\footnote{With $n=1$,
this appears as an informal
result in symmetric cryptanalysis, which is used as a black-box analysis tool in several crypto-systems.}  
 in the context of statistical cryptanalysis.
We give our main result with $n=1$ below.

\begin{theorem}[Generic Sampling Problem with $n=1$]\label{cor_thm_5}
Assume that the sampling size of $F$ is upper-bounded by $N$.
Regardless of the input size of $F$,
in order to detect the signal $F$ with an arbitrarily low probability of error, 
 it is necessary and sufficient to have the following condition satisfied,
i.e.,
 $f$ has a nontrivial Walsh coefficient $d$ with 
$|d|\ge c/{\sqrt{N}}$, where the constant $c=\sqrt{8\log 2}$.
\end{theorem}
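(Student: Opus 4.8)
The plan is to reduce the $n=1$ detection problem to reliable communication over the extremal BSC and then invoke Shannon's theorem in both directions, so that the achievability part supplies sufficiency and the converse part supplies necessity. First I would observe that for $n=1$ the only nontrivial Walsh coefficient of $f$ is $d=\widehat{f}(1)=\Pr(X=0)-\Pr(X=1)$, i.e. exactly the bias of the output bit $X$. Detecting the signal then means deciding, from at most $N$ i.i.d.\ samples of $X$, whether the source carries a bias ($H_1$) or is indistinguishable from uniform ($H_0$). Each sample of $F$ is identified with one use of a channel whose crossover probability is $p=(1+d)/2$, so that by Corollary~\ref{cor1} each transmission supplies capacity $C\approx c_0 d^2$ with $c_0=1/(2\log 2)$. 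Under this dictionary, ``reliable detection from at most $N$ samples'' becomes ``reliable decoding over $N$ uses of an extremal BSC,'' which is precisely the regime governed by Theorem~\ref{thm_1}.

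For the sufficiency direction I would argue directly from the classical result. If $|d|\ge c/\sqrt N$ with $c=\sqrt{8\log 2}$, then $d^2\ge 8\log 2/N$, i.e.\ $N\ge (8\log 2)/d^2$; by Theorem~\ref{thm_5} this many samples already detect the signal with arbitrarily low probability of error (this is the Shannon achievability direction in disguise). The only genuinely \emph{generic} point to address here is that, unlike in Theorem~\ref{thm_5}, neither $d$ nor its frequency position $m_0$ is known \emph{a priori}. I would handle this by applying the FWT to the empirical distribution of the $N$ samples and reading off the (repeatedly appearing) large coefficient, so that the detector never needs advance knowledge of $f$.

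For the necessity direction I would use the converse part of Shannon's theorem (standardly via Fano's inequality). Taking the contrapositive, suppose every nontrivial Walsh coefficient obeys $|d|<c/\sqrt N$. Then the \emph{best} available channel has capacity $C=c_0 d^2<c_0 c^2/N=4/N$, so the total information extractable from the $N$ samples stays below the threshold needed to resolve the hypothesis reliably; the converse then forces the error probability to stay bounded away from $0$, whatever decision rule is used. Hence no bounded-size sampling strategy can detect $F$, which is the required necessity.

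The step I expect to be the main obstacle is matching the exact constant $c=\sqrt{8\log 2}$, equivalently the requirement $N\cdot C\ge 4$, to the capacity formula of Corollary~\ref{cor1}. The difficulty is that sampling forces a ``repetition-code'' structure on the observations --- every sample carries the same bias --- so the naive capacity count $N\cdot C\ge 1$ undershoots, and the factor $4$ must be recovered from the detection margin (effectively working with $d/2$) together with the two-sided nature of the test. Pinning this constant down rigorously is exactly where the conjectured quantitative link between Renyi divergence of order $1/2$ and Shannon capacity would enter, and I would treat that calibration as the crux of the argument, with the approximation in Corollary~\ref{cor1} restricting the clean statement to the small-$|d|$ regime.
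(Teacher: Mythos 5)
Your sufficiency direction is fine and matches the paper: it is just Theorem~\ref{thm_5} with $n=1$, since $|d|\ge c/\sqrt N$ is equivalent to $N\ge(8\log 2)/d^2$. The gap is in the necessity direction, and it is exactly the point you flag as "the main obstacle": your reduction uses the wrong channel. You identify one sample of $F$ with one use of the extremal \emph{symmetric} BSC of Corollary~\ref{cor1}, whose capacity is $d^2/(2\log 2)$. But the BSC models the task of distinguishing bias $+d$ from bias $-d$ (input $1$ produces a sequence of bias $-d$, not bias $0$), whereas the detection problem is to distinguish bias $d$ from bias $0$. Running the converse with $C=d^2/(2\log 2)$ yields only $|d|\ge\sqrt{2\log 2}/\sqrt N=c/(2\sqrt N)$, a threshold weaker by a factor of $2$ in $|d|$ (your factor of $4$ in $N\cdot C$), and this discrepancy cannot be patched by appeals to a "detection margin," a "two-sided test," or the Renyi-divergence conjecture --- the latter is stated in the paper only as a conjecture and so cannot carry the proof.

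The paper resolves this by constructing a different, \emph{non-symmetric} binary channel (Theorem~\ref{thm_3}): the transition matrix has first row $(1-p_e,\,p_e)$ with $p_e=(1-d)/2$ (the biased source) and second row $(1/2,\,1/2)$ (the uniform source), so that the channel input literally selects which hypothesis generates the observed sample. A direct optimization of $I(X;Y)$ over the input prior $p_0$ (maximized near $p_0=1/2$) gives capacity $C\approx d^2/(8\log 2)$, and the constant $c=\sqrt{8\log 2}$ then falls out of $N\ge 1/C$ with no calibration step at all. Necessity is then the contrapositive via the converse of Theorem~\ref{thm_1} applied to \emph{this} channel. So the missing idea in your proposal is not a sharper constant-tracking argument but the correct channel model; once you replace the BSC by the asymmetric "biased row versus uniform row" channel and recompute its capacity, your two-directional Shannon argument goes through as the paper's does.
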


Assume that $f$ satisfy the following conditions:
1) the cardinality of the support of $f$ is a power of two (i.e., $2^n$), and
2) $2^n$ is small, and
3) $f(i)\in (0, 3/2^n)$, for all $i$.
Now, we present a generalized result for $n\ge 1$, 
which incorporates Theorem \ref{cor_thm_5} as a special case:

\begin{proposition}[Generic Sampling Problem with $n\ge 1$]\label{cor_thm_5b}
Assume that the sampling size of $F$ is upper-bounded by $N$.
Regardless of the input size of $F$,
in order to detect the signal $F$ with an arbitrarily low probability of error, 
 it is necessary and sufficient to have the following condition satisfied,
i.e., 
\begin{equation}\label{E_condition_generic_sampling}  
\sum_{i\ne 0} (\widehat{f}(i))^2 \ge (8\log 2)/N.
\end{equation}
\end{proposition}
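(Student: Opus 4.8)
The plan is to translate the detection problem into the channel-coding framework and then read off condition (\ref{E_condition_generic_sampling}) from the capacity of the induced channel, exactly as for the $n=1$ case (Theorem \ref{cor_thm_5}), of which this proposition is the natural generalization.

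First I would rewrite the left-hand side of (\ref{E_condition_generic_sampling}) via Parseval's relation for the Walsh transform. Since $\widehat{f}(0)=\sum_j f(j)=1$, Parseval gives $\sum_i \widehat{f}(i)^2 = 2^n\sum_j f(j)^2$, hence $\sum_{i\ne 0}\widehat{f}(i)^2 = 2^n\sum_j f(j)^2 - 1 = 2^n\sum_j (f(j)-2^{-n})^2$. Writing $u$ for the uniform distribution with $u(j)=2^{-n}$, the right-hand side is precisely the chi-squared divergence $\chi^2(f\|u)=\sum_j (f(j)-u(j))^2/u(j)$. Thus (\ref{E_condition_generic_sampling}) is equivalent to $\chi^2(f\|u)\ge (8\log 2)/N$, which explains why it is the full sum of squared coefficients, rather than a single one, that governs detection when the whole distribution is exploited.

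Next I would set up the detection channel: a binary input $B$ (signal absent / present) feeding a memoryless channel whose single output is one $2^n$-valued sample distributed as $u$ when $B=0$ and as $f$ when $B=1$. Detecting $F$ from $N$ samples amounts to recovering $B$ from $N$ independent uses of this channel, so by Shannon's Channel Coding Theorem (Theorem \ref{thm_1}) this succeeds with arbitrarily low error probability if and only if the rate $1/N$ does not exceed the capacity $C=\max_{p(B)} I(B;X)$, i.e.\ iff $N\ge 1/C$; this is the $2^n$-ary analogue of the extremal-BSC reduction behind Corollary \ref{cor1} and Theorem \ref{cor_thm_5}. The crux is then to evaluate $C$ in the small-signal regime guaranteed by the three hypotheses. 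Writing $\pi=p(B=1)$ and $\delta=f-u$, the output under a generic input is $q=u+\pi\delta$, and conditions (1)--(3) --- in particular $f(j)\in(0,3/2^n)$ together with $2^n$ small --- keep $f$ within a bounded factor of $u$, so each ratio $\delta(j)/u(j)$ is bounded and the second-order expansion of the mutual information is valid with negligible remainder. Expanding $I(B;X)=(1-\pi)D(u\|q)+\pi D(f\|q)$ to second order yields $I(B;X)\approx \tfrac{\pi(1-\pi)}{2}\chi^2(f\|u)$ in nats, whose maximum over $\pi$ is attained at $\pi=1/2$ and equals $\tfrac{1}{8}\chi^2(f\|u)$ nats $=\tfrac{1}{8\log 2}\chi^2(f\|u)$ bits. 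Hence $1/C=(8\log 2)/\chi^2(f\|u)$, so $N\ge 1/C$ is exactly $\chi^2(f\|u)\ge (8\log 2)/N$, i.e.\ (\ref{E_condition_generic_sampling}); sufficiency comes from the achievability part of Theorem \ref{thm_1} and necessity from its converse. Specializing to $n=1$, where $\chi^2(f\|u)=\widehat{f}(1)^2=d^2$, recovers Theorem \ref{cor_thm_5} with $c=\sqrt{8\log 2}$.

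The main obstacle is the rigorous control of the capacity expansion: I must show that the three hypotheses force the higher-order terms of $I(B;X)$ to be uniformly negligible over all input distributions $p(B)$, and that the optimizer is genuinely $\pi=1/2$ rather than only to leading order, so that the constant $8\log 2$ is exact and not merely asymptotic. A secondary subtlety is the translation itself: because the $N$ samples are i.i.d.\ with a single, nature-fixed value of $B$ (repetition rather than free coding), I would confirm that the governing error exponent --- the Chernoff information of $f$ against $u$ --- agrees to leading order with $C$, so that $N=1/C$ is indeed the correct necessary-and-sufficient threshold.
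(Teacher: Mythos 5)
Your proposal is correct and follows essentially the same route as the paper: the same binary-input channel ($f$ versus uniform), the same second-order expansion giving $I(X;Y)\approx \frac{p_0(1-p_0)}{2\log 2}\sum_{i\ne 0}(\widehat{f}(i))^2$ with the optimum at $p_0=1/2$ and $C\approx \sum_{i\ne 0}(\widehat{f}(i))^2/(8\log 2)$, and the same conclusion via $N\ge 1/C$; the paper just phrases the expansion through Taylor-approximating $H(f)$ and $H(Y)$ directly rather than through $\chi^2$-approximations of KL divergences. The two caveats you raise at the end (uniform control of the remainder and the repetition-coding versus free-coding translation) are genuine and are not addressed in the paper's proof either.
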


We note that the sufficient condition can be also proved based on results of the classic distinguisher (i.e., Squared Euclidean Imbalance),
 which uses the notion of Kullback-Leibler distance and states that $\sum_{i\ne 0} (\widehat{f}(i))^2 \ge (4\log 2)/N$ is required for a high probability of success \cite{vaudenay_textbook2006}.
Secondly, from (\ref{E_condition_generic_sampling}),
 the discrete statistical signals can be characterized 
by large Walsh coefficients of the associated distribution.
Thus the most significant transform-domain signals are the largest coefficients in our generalized model.

\subsection{Proof of Theorem \ref{cor_thm_5}}

We first prove the following hypothesis testing result by Shannon's Channel Coding Theorem:

\begin{theorem}\label{thm_3}
Assume that the boolean random variable $\mathcal{A}$ has bias $d$ and $d$ is small. We are
given a sequence of random samples, 
which are 
i.i.d. following the distribution of either $\mathcal{A}$ or a uniform distribution. 
We can tell the sample source with an arbitrarily low probability of error,
 using the minimum number $N$ of samples $(8\log 2)/{d^2}$, i.e.,
$O(1/d^2)$.
\end{theorem}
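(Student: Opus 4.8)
The plan is to recast the distinguishing task as a binary hypothesis test and then reduce it to the problem of reliably transmitting a single bit over an \emph{extremal} BSC, so that Corollary~\ref{cor1} and Shannon's theorem (Theorem~\ref{thm_1}) supply both the sufficiency and the necessity of the claimed sample count $N$. Concretely, write $P_1$ for the law of $\mathcal{A}$, with $\Pr(\mathcal{A}=0)=(1+d)/2$ and $\Pr(\mathcal{A}=1)=(1-d)/2$, and $P_0$ for the uniform law; the unknown source is a single bit $B\in\{0,1\}$ that we must decode from the $N$ i.i.d.\ samples $S_1,\dots,S_N$. Because the same $B$ governs every sample, this is exactly a length-$N$ repetition scheme, whose optimal (likelihood-ratio) detector has an error probability that decays exponentially in $N$; I want to show that the exponent is governed by the capacity of a suitable extremal BSC.

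First I would quantify the per-sample distinguishing power between $P_1$ and $P_0$. Expanding the relevant divergence to leading order in $d$ yields a rate of $d^2/8$ nats per sample, since $\sum_x\sqrt{P_0(x)P_1(x)}=\tfrac12\bigl(\sqrt{1+d}+\sqrt{1-d}\bigr)\approx 1-d^2/8$. The key observation is that this coincides with the capacity of an extremal BSC whose bias parameter is $d/2$ rather than $d$: by Corollary~\ref{cor1}, such a channel has capacity $C=c_0(d/2)^2=d^2/(8\log 2)$, with $c_0=1/(2\log 2)$. Identifying each sample with one use of this bias-$(d/2)$ channel, the repetition-coded hypothesis test and the channel-coding formula carry the same leading-order error exponent, and $N=1/C$ produces the target value $(8\log 2)/d^2$.

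With the equivalent channel in hand, Shannon's theorem closes both directions. For \textbf{sufficiency} (achievability), one bit can be sent with arbitrarily low error using $1/C$ channel uses, so taking $N=1/C=(8\log 2)/d^2$ samples lets us decode $B$, i.e.\ identify the source, with vanishing error. For \textbf{necessity} (the converse half of Theorem~\ref{thm_1}), reliable one-bit transmission is impossible with fewer than $1/C$ uses, so no detector based on fewer than $(8\log 2)/d^2$ samples can drive the error to zero; hence $N=(8\log 2)/d^2=O(1/d^2)$ is minimal.

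The main obstacle I expect is pinning down the constant, i.e.\ justifying the factor of $4$ (equivalently, that the effective BSC bias is $d/2$, not $d$). This requires reconciling two notions: the repetition-coded test is governed by the Chernoff information between $P_0$ and $P_1$, whereas Shannon's theorem delivers a channel capacity. I would therefore verify that in the extremal regime $|d|\ll 1$ these two quantities agree to order $d^2$, so that the clean capacity formula $c_0(d/2)^2$ legitimately controls the minimal number of samples. The remaining steps---the divergence expansion and the invocation of the achievability and converse parts of Theorem~\ref{thm_1}---are routine once this identification is secured.
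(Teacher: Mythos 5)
Your reduction to a binary hypothesis test and your leading-order computation of the Bhattacharyya coefficient, $\sum_x\sqrt{P_0(x)P_1(x)}\approx 1-d^2/8$, are fine, and you do land on the correct constant. But the pivotal step --- ``identifying each sample with one use of a BSC with bias $d/2$'' --- is a genuine gap, and you correctly sense it yourself when you call it the main obstacle. The channel actually induced by the problem is not a BSC at all: under hypothesis $B=1$ the output is $\mathrm{Ber}\bigl((1-d)/2\bigr)$ and under $B=0$ it is $\mathrm{Ber}(1/2)$, so the transition matrix is the asymmetric one $\bigl(\begin{smallmatrix}1-p_e & p_e\\ 1/2 & 1/2\end{smallmatrix}\bigr)$ with $p_e=(1-d)/2$. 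The fact that the Chernoff/Bhattacharyya exponent of the test numerically agrees to order $d^2$ with $c_0(d/2)^2$, the capacity of a fictitious bias-$(d/2)$ BSC from Corollary~\ref{cor1}, is a coincidence of leading-order constants, not a reduction; in particular it gives you no license to import the \emph{converse} half of Theorem~\ref{thm_1} for that BSC as a lower bound on the number of samples needed in the original problem. Verifying that two quantities agree to order $d^2$, as you propose, still would not establish that fewer samples are insufficient for the test.

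The paper closes exactly this hole by working with the asymmetric channel itself: it computes $I(X;Y)=H(Y)-H(Y|X)$ for that two-row channel as a function of the input probability $p_0=p(x=0)$, expands for small $d$, finds the maximizer $p_0\approx 1/2$, and obtains the capacity $C\approx d^2/(8\log 2)$ directly, after which $N=1/C$ follows from Shannon's theorem applied to the channel that genuinely models the problem. If you want to keep your hypothesis-testing framing instead, the honest route is to drop the BSC detour entirely and argue from the Chernoff information of the test (achievability and converse for binary hypothesis testing), which gives the same $d^2/(8\log 2)$ bits per sample at leading order; but then you are no longer invoking Theorem~\ref{thm_1}, and you would need to supply the hypothesis-testing converse yourself rather than borrow Shannon's.
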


\begin{proof}
We propose a novel non-symmetric binary channel. %
Assume the channel with the following transition matrix
\begin{equation}
p(y|x)=\left(
\begin{array}{cc}
1-p_e & p_e \\
1/2 & 1/2
\end{array}
\right),
\end{equation}
where $p_e=(1-d)/2$ and $d$ is small.
The matrix entry in the $x$th row and the $y$th column denotes the conditional
probability that $y$ is received when $x$ is sent.
So, the input bit $0$ is transmitted by this channel with error probability $p_e$
(i.e., the received sequence has bias $d$ if input symbols are 0)
 and the input bit $1$ is transmitted with error probability 1/2
(i.e., the received sequence has bias $0$ if input symbols are 1).
By Shannon's channel coding theorem, with a minimum number
of $N = 1/C$ transmissions, we can reliably (i.e., with an
arbitrarily low probability of error) detect the signal source (i.e., determine whether the
input is `0' or `1').

To compute the channel capacity $C$, i.e., find the maximum
 defined in (\ref{E_capacity_def}),
 no closed-form solution exists in general. Nonlinear optimization algorithms 
(see \cite{Arimoto1972channel_capacity,Blahut1972channel_capacity})
 are known to find a numerical solution. Below, 
we propose a simple method to give a closed-form estimate $C$ for our extremal binary channel.
As $I(X;Y)= H(Y)-H(Y|X)$, we first compute $H(Y)$ by
\begin{equation}\label{E_tmp1}
H(Y) = H\Bigl( p_0(1-p_e)+ (1-p_0)\times \frac{1}{2} \Bigr),
\end{equation}
where $p_0$ denotes $p(x=0)$ for short.
Next, we compute %
\begin{eqnarray}
H(Y|X) &=& \sum_x p(x)H(Y|X=x) \nonumber\\
&=& p_0\Bigl( H(p_e)-1 \Bigr) +1.\label{E_tmp2}
\end{eqnarray}
Combining (\ref{E_tmp1}) and (\ref{E_tmp2}), we have
\begin{equation}
I(X;Y) = H\Bigl( p_0\times \frac{1}{2} - p_0p_e + \frac{1}{2} \Bigr) - 
  p_0H(p_e) + p_0 -1.
\end{equation}
As $p_e=(1-d)/2$, we have
\begin{equation}
I(X;Y)
=H(\frac{1 + p_0d}{2}) -p_0\Bigl( H(\frac{1-d}{2})-1 \Bigr) - 1.
\end{equation}
For small $d$,
 we apply (\ref{approx_H_special}) in Appendix
\begin{equation}\label{E_tmp3}
I(X;Y) = -\, \frac{p_0^2 d^2}{2\log 2} - p_0\Bigl( H(\frac{1-d}{2})-1 \Bigr) + O(p_0^4 d^4).
\end{equation}
 Note that
 the last term $O(p_0^4 d^4)$ on the right side of (\ref{E_tmp3}) is ignorable.
 Thus, $I(X;Y)$ is estimated to approach the maximum when
\begin{equation}
p_0 = -\,\frac{ H(\frac{1-d}{2})-1 }{d^2/(\log 2)} \approx \frac{d^2/(2\log 2)}{d^2/(\log 2)}=\frac{1}{2}.
\end{equation}
Consequently, we estimate the channel capacity (\ref{E_tmp3}) by
\begin{eqnarray*}
C &\approx& -\,\frac{1}{4}d^2/(2\log 2) +\frac{1}{2} \Bigl(1-H(\frac{1-d}{2})\Bigr)\\ 
 &\approx& -\, d^2/(8\log 2) + d^2/(4\log 2),
\end{eqnarray*}
which is $d^2/(8\log 2)$.
\end{proof}

We now proceed to prove Theorem \ref{cor_thm_5}.
The only nontrivial Walsh coefficient $d$ for $n=1$ is $\widehat{f}(1)$, 
which is the bias of $F$.
First, we will show by contradiction that this is a necessary condition.
That is, if we can identify $F$ with an arbitrarily low probability of error, 
then, we must have $|d|\ge c/{\sqrt{N}}$.
Suppose $|d| < c/{\sqrt{N}}$ otherwise.
Following the proof of Theorem \ref{thm_3}, we know that the error probability is bounded away from zero 
as the consequence of Shannon's Channel Coding Theorem. This is contradictory.
Thus, we have shown that the condition on $d$ is a necessary condition.
Next, we will show that it is also a sufficient condition.
That is, if $|d|\ge c/{\sqrt{N}}$, then, 
we can identify $F$ with an arbitrarily low probability of error.
This follows directly from Theorem \ref{thm_5} with $n=1$.
We complete our proof. 
\subsection{Proof of Proposition \ref{cor_thm_5b}}

Assume that the channel have transition matrix $p(y|x)$.
Let $p(y|x=0)$ denote the distribution $f$, and
let $p(y|x=1)$ be a uniform distribution.
Denote the channel capacity by $C$.
For convenience, we let $f(i)=u_i + 1/{2^n}$
for $i=0,\ldots,2^n-1$.
 Note that we have $\sum_i u_i =0$ and $- 1/{2^n} < u_i < (2^n-1)/{2^n}$ for all $i$.

By Taylor series, for all $i$ we have
\begin{eqnarray}
\log(u_i + \frac{1}{2^n})
&=& \log \frac{1}{2^n} + 2\Bigl(\frac{u_i}{\frac{2}{2^n} + u_i} + \frac{1}{3}(\frac{u_i}{\frac{2}{2^n} + u_i} )^3\nonumber \\ && +\cdots \Bigr)\nonumber \\
 &\approx& \log \frac{1}{2^n} + \frac{2u_i}{\frac{2}{2^n} + u_i}, \label{Eq_a1}
\end{eqnarray}
as we know $u_i/(\frac{2}{2^n} + u_i) \in (-1,1)$.
And we deduce that with small $2^n$, we can calculate $-\log 2 \cdot H(f)$ by
\begin{eqnarray}
&& \sum_i (u_i + \frac{1}{2^n}) \log(u_i + \frac{1}{2^n})\nonumber \\ &\approx&
  \log \frac{1}{2^n} + 2\sum_i u_i - \sum_i \frac{2u_i}{2 + 2^n\cdot u_i}\nonumber \\
&\approx& 
\log \frac{1}{2^n} -
\frac{1}{2^{n-1}} \sum_i (1 - \frac{1}{1 + 2^{n-1}u_i}).
\label{Eq_a2}
\end{eqnarray}
Assuming that $|2^{n-1}u_i| < 1$ for all $i$ (and small $2^n$),
we have 
\begin{equation}
\sum_i \frac{1}{1 + 2^{n-1}u_i} \approx \sum_i \Bigl(1 - 2^{n-1}u_i + (2^{n-1}u_i)^2 \Bigr).
\end{equation}
We continue (\ref{Eq_a2}) by
\(
-\log 2 \cdot H(f) \approx \log \frac{1}{2^n} + 2^{n-1} \sum_i u_i^2 .
\)
Meanwhile, by property of Walsh transform (cf. \cite[Sect.2]{my_new_submission}), we know
\begin{equation}
 \sum_i (\widehat{f}(i))^2 = 2^n \sum_i \Bigl( f(i) \Bigr)^2 
=  2^n \sum_i u_i^2 + 1 .
\end{equation}
So, we have shown an important result as follows
\begin{eqnarray}
H(f) &\approx& n - \frac{2^n}{2\log 2} \sum_i \Bigl(f(i)-\frac{1}{2^n}\Bigr)^2\nonumber \\
 &=& n - \frac{\sum_{i\ne 0} (\widehat{f}(i))^2}{2\log 2}, \label{Eq_H_f}
\end{eqnarray}
assuming that
1) the cardinality of the support of $f$ is a power of two (i.e., $2^n$), and
2) $2^n$ is small, and
3) $f(i)\in (0, 3/2^n)$, for all $i$.

Next, in order to calculate $C$, by (\ref{Eq_H_f}) we first compute
\begin{equation}
\label{Eq_H_Y_given_X}
H(Y|X) = p_0H(f) + (1-p_0)n \approx n - \frac{p_0\sum_{i\ne 0} (\widehat{f}(i))^2}{2\log 2},
\end{equation}
where $p_0$ denote $p(x = 0)$ for short.
Denote the distribution of $Y$ by $D_Y$. We have
$D_Y(i) = p_0f(i) + (1-p_0)/2^n = p_0u_i + 1/2^n$ for all $i$. Again we can apply (\ref{Eq_H_f}) and get
\begin{equation}
\label{Eq_H_Y}
H(Y) \approx n - \frac{\sum_{i\ne 0} (\widehat{D_Y}(i))^2}{2\log 2} = 
n - \frac{p_0^2\sum_{i\ne 0} (\widehat{f}(i))^2}{2\log 2} .
\end{equation}
So, we have
\begin{equation}
I(X;Y) = H(Y) - H(Y|X) \approx \frac{(p_0 - p_0^2)\sum_{i\ne 0} (\widehat{f}(i))^2}{2\log 2} .
\end{equation}
When $p_0=1/2$, we have the maximum $I(X;Y)$, which equals
\begin{equation}
\label{Eq_new_C}
C \approx \frac{\sum_{i\ne 0} \Bigl(\widehat{f}(i)\Bigr)^2}{8\log 2} = \frac{2^n \sum_i \Bigl(f(i)-\frac{1}{2^n}\Bigr)^2}{8\log 2}.
\end{equation}
Consequently, we have
$N\ge 1/C$, i.e.,
$\sum_{i\ne 0} (\widehat{f}(i))^2 \ge (8\log 2)/N$.
And this is a necessary and sufficient condition, following Shannon's theorem.

\subsection{More Generalized Results}

Above we consider the case that $f(i)$ is not so far from the uniform distribution.
Based on the weaker assumption that $2^n$ is small and $f(i)= 1/2^n + u_i >0$ (i.e., $u_i > - 1/2^n$) for all $i$, 
 we now show a a more general result.
We have
\begin{equation}
H(f) \approx n + \frac{1}{\log 2} \sum_i \frac{u_i}{1 + 2^{n-1}u_i}
\end{equation}
by (\ref{Eq_a2}).
Following similar computations we have
\begin{eqnarray}
H(Y|X) &\approx& n + \frac{p_0}{\log 2} \sum_i \frac{u_i}{1 + 2^{n-1}u_i}, \\
H(Y) &\approx& n + \frac{1}{\log 2} \sum_i \frac{p_0u_i}{1 + 2^{n-1}p_0 u_i} .
\end{eqnarray}
So, we obtain the following general result,
\begin{equation}
\label{Eq_new_general_C}
C \approx \max_{p_0} \frac{1}{\log 2} \sum_i \Bigl(\frac{p_0 u_i}{1 + 2^{n-1}p_0 u_i}\, - \,\frac{p_0 u_i}{1 + 2^{n-1}u_i}\Bigr).
\end{equation}
Recall that if $|u_(i)| < 2/2^n$ for all $i$, (\ref{Eq_new_general_C}) can be approximated by (\ref{Eq_new_C}), 
which is achieved with $p_0=1/2$. Specifically, if $|u_(i)| < 2/2^n$,
the approximation for the addend in (\ref{Eq_new_general_C})
can be expressed as follows,
\begin{equation}
\label{Eq_a3}
(\frac{p_0 u_i}{1 + 2^{n-1}p_0 u_i}\, - \,\frac{p_0 u_i}{1 + 2^{n-1}u_i} ) \approx
 2^{n-1}u_i^2 (p_0 - p_0^2),
\end{equation}
where we use 
$\frac{v}{1+v}=1-\frac{1}{1+v}\approx 1-(1 - v + v^2)=v-v^2$ (for $|v|<1$). 

Note that for $|v|>1$,
we have $\frac{v}{1+v} \approx 1 - 1/v + 1/v^2$.
We can show that with $2^{n-1} u_i = k > 1$ for some $i$, the addend in (\ref{Eq_new_general_C}) can achieve the maximum 
when $p_0=1/k$, that is, 
\begin{equation}
\label{Eq_a3b}
\max_{p_0}
(\frac{p_0 u_i}{1 + 2^{n-1}p_0 u_i}\, - \,\frac{p_0 u_i}{1 + 2^{n-1}u_i} ) \approx
\frac{1}{2^{n-1}}(1-\frac{1}{k} + \frac{1}{k^2}-\frac{1}{k^3}) .
\end{equation}
On the other hand, the right-hand side of (\ref{Eq_a3}) equals
$(p_0 - p_0^2)k^2 /2^{n-1}$,
which is much larger than (\ref{Eq_a3b}).

Meanwhile, with $2^{n-1} u_i = k = 1$ for some $i$,
we have
\begin{eqnarray}
&& \max_{p_0}
(\frac{p_0 u_i}{1 + 2^{n-1}p_0 u_i}\, - \,\frac{p_0 u_i}{1 + 2^{n-1}u_i} ) \nonumber\\ 
&\approx&
\max_{p_0} \frac{1}{2^{n-1}}(\frac{p_0}{2}- p_0^2) 
 = \frac{1}{16} \cdot\frac{1}{2^{n-1}}, \label{Eq_a3c}
\end{eqnarray}
when $p_0=1/4$.

\subsection{Further Discussions}
Based on Renyi's information measures (cf. \cite{renyi}),
we make a conjecture\footnote{see \cite{FPS2017capacity} for most recent results on the conjecture.} below for an even more general form of our channel capacity $C$.
Recall that Renyi's information divergence (see \cite{renyi}) of order $\alpha = 1/2$ 
of distribution $P$ from another distribution $Q$ on a finite set $\mathcal{X}$ is defined as
\begin{equation}
D_{\alpha} (P \| Q) \stackrel{\text{def}}{=} 
  \frac{1}{\alpha - 1} \log \sum_{x \in \mathcal{X}} P^{\alpha}\bigl( x \bigr) Q^{1-\alpha}\bigl( x \bigr) .  
\end{equation}
So, with $\alpha = 1/2$,
\begin{equation}
D_{1/2} (P \| Q) \stackrel{\text{def}}{=}  (-2) \log \sum_{x \in \mathcal{X}} \sqrt{P(x)Q(x)} .
\end{equation}

\begin{conjecture}
Let $Q,U$ be a non-uniform distribution and a uniform distribution over the support of cardinality $2^n$.
Let the matrix of $T$ consist of two rows $Q,U$ and $2^n$ columns.
We have the following relation between Renyi's divergence of degree $1/2$ and the generalized channel capacity of degree $1/2$
 (i.e., standard Shannon's channel capacity),
\begin{equation}
D_{1/2}(Q\|U)=2\cdot C_{1/2}(T).
\end{equation}
\end{conjecture}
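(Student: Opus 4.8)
The plan is to evaluate both sides of the claimed relation in the near-uniform regime that underlies all of Section~\ref{sect_main_result}, verify that they agree to leading order with exactly the asserted factor of two, and then isolate what would be needed to make the equality exact. Throughout I write $Q(i)=1/2^n+u_i$ with $\sum_i u_i=0$, and recall $U(i)=1/2^n$, as in the proof of Proposition~\ref{cor_thm_5b}. The channel $T$ is then precisely the binary-input channel of that proof, with non-uniform row $p(y|x=0)=Q$ and uniform row $p(y|x=1)=U$, so its capacity has already been computed.

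First I would treat the left-hand side directly from the definition. Since $U(i)=1/2^n$, the Bhattacharyya coefficient is
\begin{equation}
B \stackrel{\text{def}}{=} \sum_i\sqrt{Q(i)U(i)}=\frac{1}{2^n}\sum_i\sqrt{1+2^n u_i},
\end{equation}
and expanding $\sqrt{1+t}\approx 1+t/2-t^2/8$ together with $\sum_i u_i=0$ gives $B\approx 1-\tfrac{2^n}{8}\sum_i u_i^2$. Expanding the logarithm and invoking the Walsh--Parseval identity $\sum_{i\ne 0}(\widehat{Q}(i))^2=2^n\sum_i u_i^2$ (cf.\ \cite{my_new_submission}) then yields
\begin{equation}
D_{1/2}(Q\|U)=-2\log B\;\approx\;\frac{2^n\sum_i u_i^2}{4\log 2}=\frac{\sum_{i\ne 0}(\widehat{Q}(i))^2}{4\log 2}.
\end{equation}

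For the right-hand side I would reuse the capacity already derived for this very channel. By (\ref{Eq_new_C}) the degree-$1/2$ capacity, which in this near-uniform window coincides with the ordinary Shannon capacity and is attained at $p_0=1/2$, satisfies $C_{1/2}(T)\approx \sum_{i\ne 0}(\widehat{Q}(i))^2/(8\log 2)$. Comparing the two displays gives $D_{1/2}(Q\|U)\approx 2\,C_{1/2}(T)$, which is exactly the asserted relation.

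The hard part is promoting this leading-order agreement to the exact identity. This would require (a) fixing a definition of the order-$1/2$ capacity $C_{1/2}(T)$ that remains meaningful away from uniform, most naturally through a R\'enyi/Gallager mutual information expressed via $B$, and (b) proving the optimal input is $p_0=1/2$ and that the resulting value equals $-\log B=\tfrac12 D_{1/2}(Q\|U)$ for \emph{all} two-row channels $T$. Outside the near-uniform window this is delicate: already for $n=1$ one checks that, with Shannon's capacity, the identity holds only to leading order and fails once $Q$ is far from uniform, while the optimizing input itself drifts away from $1/2$ as in (\ref{Eq_new_general_C}). Thus the capacity on the right must be the genuine degree-$1/2$ notion rather than Shannon's, and establishing that this generalized capacity equals exactly $\tfrac12 D_{1/2}(Q\|U)$ is precisely the open point, consistent with the statement remaining a conjecture (cf.\ \cite{FPS2017capacity}).
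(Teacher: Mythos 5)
The statement you are addressing is stated in the paper as a \emph{conjecture}: the paper offers no proof of it at all (it only points to \cite{FPS2017capacity} for ``most recent results''), so there is nothing in the paper to compare your argument against step by step. Judged on its own terms, your calculation is correct as far as it goes and is consistent with the paper's own machinery: writing $Q(i)=1/2^n+u_i$, the Bhattacharyya coefficient expands to $B\approx 1-\tfrac{2^n}{8}\sum_i u_i^2$, so $D_{1/2}(Q\|U)=-2\log B\approx \sum_{i\ne 0}(\widehat{Q}(i))^2/(4\log 2)$, which is exactly twice the capacity value $\sum_{i\ne 0}(\widehat{f}(i))^2/(8\log 2)$ obtained in (\ref{Eq_new_C}) for the very channel $T$ in question. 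Your observation that the identity cannot hold exactly with the ordinary Shannon capacity (e.g.\ for $n=1$ with $Q$ far from uniform the two sides visibly disagree) is also correct and is a genuinely useful clarification of what the conjecture must mean.

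However, this is not a proof of the statement, and you say so yourself. The conjecture asserts an \emph{exact} equality $D_{1/2}(Q\|U)=2\cdot C_{1/2}(T)$, whereas you establish only an approximate equality valid to second order in the $u_i$, in the same near-uniform window (small $2^n$, $f(i)\in(0,3/2^n)$) under which Proposition~\ref{cor_thm_5b} is derived. The concrete gaps are: (a) the quantity $C_{1/2}(T)$ is never given a precise definition independent of the Shannon capacity, so the right-hand side of the claimed identity is not yet a well-defined object in your argument; (b) even granting a definition via a R\'enyi/Gallager-type mutual information, you do not prove that the optimal input distribution is $p_0=1/2$ for an arbitrary two-row channel, nor that the optimum equals $-\log B$ exactly rather than asymptotically. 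These are precisely the open points, which is consistent with the statement remaining a conjecture in the paper; but a reader should understand that your text is a leading-order consistency check plus a correct diagnosis of what remains, not a proof.
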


Finally, 
assume that $2^n$-valued $F$ has potentially large input space $2^L$.
The collected normalized distribution of $N \gg 2^n$ output samples 
in the time-domain approximately fits in the noisy model of
 \cite{isit2014wht,isit2015wht}. 
That is, the additive Gaussian noise is i.i.d. and has zero mean and variance $\sigma^2=1/(N2^n)$.
In this case, 
noisy sparse WHT \cite{isit2015wht} can be used for our generic sampling problem
in recovering the nonzero Walsh coefficients $\widehat{f}(i)$ and index positions $i$. 
And
we obtain evaluation time $N$ queries of $F$, processing time roughly on the order of $n$,
provided that 
\begin{equation}
\sum_{i\ne 0} (\widehat{f}(i))^2 \ge (8\log 2)/N,
\end{equation}
that is, \\
\begin{equation}
\text{SNR}
= \sum_{i \ne 0} \Bigl(\widehat{f}(i)\Bigr)^2/(2^n \cdot \sigma^2) 
= \sum_{i \ne 0} \Bigl(\widehat{f}(i)\Bigr)^2/(\frac{1}{N}) 
\ge 8\log 2.
\end{equation}

\section{Applications and Experimental Results}
\label{sect_experiments}

We first demonstrate a cryptographic sampling application.  
The famous block cipher GOST 28147-89 is a balanced
Feistel network of 32 rounds. 
It has a 64-bit block size and a key length of 256 bits.
Let 32-bit $L_i$ and $R_i$ denote the left and
right half at Round $i\ge 1$ and $L_0,R_0$ denote the plaintext.
The subkey for Round $i$ is denoted by $k_i$. 
For the purpose of multi-round analysis, 
our target function is $F(R_{i-1}, k_i) = R_{i-1}\oplus k_i\oplus f(R_{i-1}, k_i)$, where $f(R_{i-1}, k_i)$ is the round function.
 We choose $N=2^{40}$.
It turns out that the largest three Walsh coefficients are $2^{-6},2^{-6.2},2^{-6.3}$ respectively.
This new weakness leads to various severe attacks \cite{FPS2017gost} on GOST.
Similarly, our cryptographic sampling technique is applicable and it
 further threatens the security of the SNOW 2.0 cipher \cite{zhangbin_crypto2015}.

For general $\text{SNR} \ge 8\log 2$,
regardless of $2^L$, when $N \sim 2^n$,
we choose appropriate $b$ such that $N=b\cdot 2^\ell$. 
So, roughly $b\cdot \ell$ processing time 
 is needed. 
With $L=64$, it would become a powerful universal analytical tool\footnote{Note that
current computing technology \cite{acm_exascale} can afford exascale WHT (i.e., on the order of $2^{60}$) within 2 years,
which uses $2^{15}$ modern PCs.} 
 for 
security evaluation of
the core building blocks of symmetric cryptography (cf. \cite{my_new_submission,vaudenay_new}).

Another notable practical application is software performance optimization.  
Modern large-scale IT systems are of hybrid nature.
Usually, only partial information about the architecture as well as some of its component units is known.
Further, the revolutionary change of the physical components (e.g., main memory, storage) inevitably
demand that the system take full advantages of the new hardware characteristics.
We expect that our sampling techniques 
(and the transform-domain analysis of the running time)
would help with performance analysis and optimization for the whole heterogeneous system.

\section{Conclusion}
\label{sect_end}
In this paper,
we model general discrete statistical signals as the output samples of
an unknown arbitrary yet fixed function (which is the signal source).
We translate Shannon's channel coding theorem
to solve a hypothesis testing problem.
The translated result allows to solve a generic sampling problem, 
for which we know nothing about the signal source \emph{a priori} 
and we can only afford bounded sampling measurements.
Our main results demonstrate that the classical signal processing tool of Walsh transform is essential: 
it is the large Walsh coefficient(s) that characterize(s) discrete statistical signals, 
regardless of the signal sources. 
By Shannon's theorem, 
we establish the \emph{necessary and sufficient} condition 
for the generic sampling problem under the general assumption of statistical signal sources.
As for future direction of this work,
it is interesting to 
investigate noisy sparse WHT for our generic sampling problem in a more general setting $N \ll 2^n$ and $2^n$ is large,
in which case the noise cannot be modelled as Gaussian.

\end{document}